\newtheorem{theorem}{Theorem}
\newtheorem{proposition}{Proposition}
\newtheorem{defn}{Definition}[section]
\title{Transposed Poisson superalgebra}
\date{}
\author{Viktor Abramov \\ {\small Institute of Mathematics and Statistics},\\
        {\small University of Tartu, Estonia}
\and
Olga Liivapuu \\ {\small Institute of Forestry and Engineering,}\\
                  {\small Estonian University of Life Sciences, Estonia}}
\begin{document}

\maketitle
\begin{abstract}
{In this paper we propose the notion of a transposed Poisson superalgebra. We prove that a transposed Poisson superalgebra can be constructed by means of a commutative associative superalgebra and an even degree derivation of this algebra. Making use of this we construct two examples of transposed Poisson superalgebra. One of them is the graded differential algebra of differential forms on a smooth finite dimensional manifold, where we use the Lie derivative as an even degree derivation. The second example is the commutative superalgebra of basic fields of the quantum Yang-Mills theory, where we use the BRST-supersymmetry as an even degree derivation to define a graded Lie bracket. We show that a transposed Poisson superalgebra has six identities that play an important role in the study of the structure of this algebra.}
\end{abstract}


\section{INTRODUCTION}
Poisson algebras play an important role in many branches of differential geometry, theoretical mechanics, and field theory. Thus this area of research is actively developing, as evidenced by various generalizations of the concept of Poisson algebra. One direction of development in the theory of Poisson algebras is the extension of the concept of Poisson algebra to structures with an $n$-ary multiplication law. Initially such a generalization was proposed by Nambu in his paper \cite{Nambu}  and the model of quarks in the theory of elementary particles served as a stimulus for such a generalization. Later, this generalization called Nambu-Poisson algebra was studied in a number of papers, and an excellent survey of this direction in the development of the theory of Poisson algebras is the paper by Takhtajan \cite{Takhtajan}.  A Nambu-Poisson algebra is a $n$-Lie algebra, where $n$-Lie algebra is based on an extension of binary Lie bracket to $n$-ary Lie bracket. The concept of a $n$-Lie algebra was proposed by Filippov \cite{Filippov}. The elements of a $n$-Lie algebra satisfy the Filippov-Jacobi identity, which is a generalization of Jacobi identity to $n$-ary bracket. The classification of simple linearly compact $n$-Lie superalgebras is given in \cite{Cantarini-Kac}. A class of ternary Lie superalgebras constructed by means of the supertrace and applications of these ternary Lie superalgebras in BRST-supersymmetries are proposed in  \cite{Abramov 1, Abramov 2}.

Recently, the notion of a transposed Poisson algebra has been proposed. A transposed Poisson algebra can be considered as a structure dual to the concept of a Poisson algebra. It was shown in \cite{Bai-Guo-Wu-2020} that a transposed Poisson algebra in its properties is very similar to a Poisson algebra. For example, the class of transposed Poisson algebras is closed under the tensor product of such algebras. In \cite{Bai-Guo-Wu-2020} it is also indicated that there is an important connection between transposed Poisson algebras and Novikov-Poisson algebras. More exactly, if we equip a Novikov-Poisson algebra with the Lie commutator constructed with the help of multiplication in Novikov-Poisson algebra, then the Novikov-Poisson algebra becomes a transposed Poisson algebra. In \cite{Bai-Guo-Wu-2020} it is shown that a transposed Poisson algebra possesses a number of identities and this is an important property of a transposed Poisson algebra. In \cite{Kaygorodov} all structures of complex transposed Poisson algebras on Galilean type Lie algebras and superalgebras are described.

In this paper we propose the notion of a transposed Poisson superalgebra. In analogy with transposed Poisson algebra a transposed Poisson superalgebra is a vector space endowed with two structures, where one of them is a commutative associative superalgebra and the second is a Lie superalgebra. These two structures satisfy the graded compatibility condition. We prove that if $\mathfrak G$ is a commutative associative superalgebra and $\delta$ is an even derivation of $\frak G$ then the graded Lie bracket
\begin{equation}
[u,v]_\delta=u\cdot\delta(v)-(-1)^{|u|\,|v|}\,v\cdot\delta(u),
\label{introduction 1}
\end{equation}
where $u,v\in{\mathfrak G}$, $|u|,|v|$ are degrees of elements $u,v$ respectively, defines the structure of a transposed Poisson superalgebra on $\mathfrak G$. We propose the classification of transposed Poisson superalgebras in low-dimensional case, that is, in the case of (1,1)-superalgebras. We give two examples of transposed Poisson superalgebra based on the graded Lie bracket (\ref{introduction 1}). The first example is a transposed Poisson superalgebra constructed with the help of graded differential algebra. Particularly, we show that the algebra of differential forms on a smooth finite-dimensional manifold $M^n$ can be equipped with a structure of a transposed Poisson superalgebra if we use (\ref{introduction 1}), where $u,v$ are differential forms and $\delta$ is the Lie derivative $\cal L_{\mathtt X}$, $\mathtt X$ is a vector field on $M^n$. We prove that the graded Lie bracket of two differential forms induces a graded Lie bracket on the cohomology classes of differential forms. The second example of a transposed Poisson superalgebra is related to quantum Yang-Mills theory. The BRST-supersymmetry of the quantum Yang-Mills theory can be considered as an even derivation of the superalgebra of basic fields of theory. We construct the graded Lie bracket on this superalgebra with the help of operator of BRST-supersymmetry. Finally we prove that a transposed Poisson superalgebra possesses a rich class of identities which can be considered as graded versions of identities proposed in \cite{Bai-Guo-Wu-2020}.
\section{Transposed Poisson Superalgebra}
In this section we propose a notion of a graded transposed Poisson algebra. We construct an example of a graded transposed Poisson algebra by means of a graded differential algebra.

Let $\mathbb K$ be a field either of real or complex numbers. A transposed Poisson algebra $\mathfrak{P}$ \cite{Bai-Guo-Wu-2020}  is a vector space over $\mathbb K$ with two algebraic operations. One of them will be denoted by $(x,y)\in {\mathfrak P}\times{\mathfrak P}\to x\cdot y\in{\mathfrak P}$ and it defines a structure of associative commutative algebra on $\mathfrak P$. The second is a Lie bracket $(x,y)\in {\mathfrak P}\times{\mathfrak P}\to [x,y]\in{\mathfrak P}$, that is, bilinear, skew-symmetric mapping which satisfies the Jacoby identity. These two structures, i.e. an associative commutative algebra and a Lie algebra, define a structure of transposed Poisson algebra if they are compatible and the condition of compatibility of these two structures has the following form
\begin{equation}
    2z\cdot[x,y]=[z\cdot x,y]+[x,z\cdot y],\quad x,y,z\in{\mathfrak P}.
    \label{condition of compatibility classic}
\end{equation}
An important example of a transposed Poisson algebra \cite{Bai-Guo-Wu-2020} is an associative commutative algebra $\cal A$ endowed with the Lie bracket $[u,v]_\delta=u\cdot\delta(v)-v\cdot\delta(u)$, where $u,v\in{\cal A}$, $(u,v)\to u\cdot v$ is a multiplication in $\cal A$ and $\delta:{\cal A}\to{\cal A}$ is a derivation. Particularly if $M^n$ is a smooth $n$-dimensional manifold, ${\cal F}(M^n)$ is the algebra of smooth functions on $M^n$ and $\tt X$ is a vector field then
\begin{equation}
[f,g]_{\tt X}=f\,{\tt X}(g)-g\,{\tt X}(f),\;\;\; f,g\in {\cal F}(M^n),
\label{bracket of functions}
\end{equation}
defines transposed Poisson algebra of smooth functions on a manifold $M^n$.

We extend the notion of transposed Poisson algebra to a super case by the following definition.
\begin{defn}
    Let ${\mathfrak G}={\mathfrak G}_0\oplus{\mathfrak G}_1$ be a super vector space over $\mathbb K$, where the degree of a homogeneous element $x\in {\mathfrak G}$ will be denoted by $|x|$, that is, $|x|\in{\mathbb Z}_2$. A transposed Poisson superalgebra is a triple $(\mathfrak G,\,\cdot\,,[\,,\,])$, where $(\mathfrak G,\,\cdot\,)$ is an associative commutative superalgebra and $(\mathfrak G,[\,.\,])$ is a Lie superalgebra. Hence we have the following properties
$$
|x\cdot y|=|x|+|y|,\;\;x\cdot y=(-1)^{|x||y|}\,y\cdot x,
$$
and
\begin{eqnarray}
&& |[x,y]|=|x|+|y|,\nonumber\\
&& [x,y]=-(-1)^{|x|\,|y|}[y,x],\nonumber\\
&& [x,[y,z]]=[[x,y],z]+(-1)^{|x|\,|y|}[y,[x,z]].\nonumber
\end{eqnarray}
The compatibility condition for these two structures has the form
\begin{equation}
2z\cdot[x,y]=[z\cdot x,y]+(-1)^{|x|\,|z|}[x,z\cdot y].
\label{condition of compatibility}
\end{equation}
\end{defn}
Obviously $({\mathfrak G}_0,\cdot)$ is an associative commutative algebra and $({\mathfrak G}_0,[\,,\,])$ is a Lie algebra. In the case of even degree elements, i.e. elements of $\mathfrak G_0$, the compatibility condition (\ref{condition of compatibility}) takes on the form (\ref{condition of compatibility classic}). Hence $({\mathfrak G}_0,\cdot,[\,,\,])$ is a transposed Poisson algebra.

An important example of a transposed Poisson algebra is an algebra constructed by means of an associative commutative algebra and its derivation. The next theorem shows that this construction can be extended to the case of superalgebras.
\begin{theorem} Let ${\cal A}={\cal A}_0\oplus{\cal A}_1$ be an associative commutative superalgebra and $\delta:{\cal A}\to{\cal A}$ be an even degree derivation of $\cal A$. Define bracket by
\begin{equation}
[u,v]_\delta=u\cdot\delta(v)-(-1)^{|u|\,|v|}v\cdot\delta(u).
\label{graded commutator}
\end{equation}
Then (\ref{graded commutator}) defines the structure of Lie superalgebra on $\cal A$ and this structure satisfies the compatibility condition (\ref{condition of compatibility}), i.e. the bracket (\ref{graded commutator}) defines the structure of transposed Poisson superalgebra on $\cal A$.
\label{theorem 1}
\end{theorem}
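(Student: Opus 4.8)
The plan is to verify, one at a time, the three defining properties of the bracket $[\,,\,]_\delta$ from \eqref{graded commutator}: that it is homogeneous of degree $|u|+|v|$ and graded skew-symmetric, that it satisfies the graded Jacobi identity, and that it is compatible with the product in the sense of \eqref{condition of compatibility}. Throughout I would exploit two facts about an \emph{even} derivation. First, $\delta$ preserves degree, so $|\delta(u)|=|u|$; hence $u\cdot\delta(v)$ has degree $|u|+|v|$ and $[u,v]_\delta$ is homogeneous of degree $|u|+|v|$. Second, $\delta$ obeys the ungraded Leibniz rule $\delta(u\cdot v)=\delta(u)\cdot v+u\cdot\delta(v)$, with no Koszul sign, precisely because $\delta$ is of degree zero. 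I would also use graded commutativity in the form $a\cdot b=(-1)^{|a|\,|b|}b\cdot a$, and in particular $\delta(a)\cdot\delta(b)=(-1)^{|a|\,|b|}\delta(b)\cdot\delta(a)$.

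Graded skew-symmetry is immediate: substituting into \eqref{graded commutator} and using $(-1)^{|u|\,|v|}(-1)^{|v|\,|u|}=1$ gives $[u,v]_\delta=-(-1)^{|u|\,|v|}[v,u]_\delta$. For the compatibility condition I would expand the three brackets occurring in \eqref{condition of compatibility} directly via the Leibniz rule, which produces six monomials on the right-hand side. Bringing each monomial to a common normal form by associativity and graded commutativity, the six terms pair up: two reproduce $2\,z\cdot x\cdot\delta(y)$, two reproduce $-2(-1)^{|x|\,|y|}z\cdot y\cdot\delta(x)$, and the remaining two, namely those carrying $\delta(z)$, cancel against one another. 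Their sum is $2\,z\cdot[x,y]_\delta$, so \eqref{condition of compatibility} holds.

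The graded Jacobi identity is the main obstacle, since it involves doubly nested brackets and second derivatives. The key simplification I would isolate first is that applying $\delta$ to a bracket loses its cross-terms: for instance $\delta([y,z]_\delta)=\delta(y)\delta(z)+y\,\delta^2(z)-(-1)^{|y|\,|z|}\bigl(\delta(z)\delta(y)+z\,\delta^2(y)\bigr)$, and here $\delta(y)\delta(z)$ and $(-1)^{|y|\,|z|}\delta(z)\delta(y)$ are equal by graded commutativity, so they cancel, leaving $\delta([y,z]_\delta)=y\,\delta^2(z)-(-1)^{|y|\,|z|}z\,\delta^2(y)$. As a consequence every term in the full expansions of $[x,[y,z]_\delta]_\delta$, of $[[x,y]_\delta,z]_\delta$, and of $[y,[x,z]_\delta]_\delta$ is of one of only two monomial types: a triple product carrying a single $\delta^2$ on one variable, or a triple product carrying a single $\delta$ on each of two variables.

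Having reduced to these two types, I would expand the three terms of the Jacobi relation and match monomials class by class after normal-form reduction. The $\delta^2(z)$ and $\delta^2(y)$ monomials of $[x,[y,z]_\delta]_\delta$ coincide, after collapsing signs, with monomials produced by $(-1)^{|x|\,|y|}[y,[x,z]_\delta]_\delta$ and by $[[x,y]_\delta,z]_\delta$ respectively, while the two $\delta^2(x)$ monomials appearing on the right-hand side cancel each other. The mixed $\delta\delta$ monomials match in the same way, with the two terms carrying the plain factor $x$ cancelling and the $y$- and $z$-factor terms pairing up across the two sides. Collecting everything yields $[x,[y,z]_\delta]_\delta=[[x,y]_\delta,z]_\delta+(-1)^{|x|\,|y|}[y,[x,z]_\delta]_\delta$. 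The only genuine difficulty is the disciplined tracking of the Koszul signs; the cancellation of the $\delta(a)\delta(b)$ cross-terms is what keeps the computation finite, and it is worth noting that no assumption on $\delta^2$ (such as it being again a derivation) is ever required.
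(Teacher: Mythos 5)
Your proposal is correct and follows essentially the same route as the paper's proof: direct expansion of the three double brackets into normal-form monomials (triple products carrying either one $\delta^2$ or two single $\delta$'s), term-by-term matching for the graded Jacobi identity, and a six-monomial expansion with pairwise cancellation for the compatibility condition. The cross-term cancellation $\delta([y,z]_\delta)=y\,\delta^2(z)-(-1)^{|y|\,|z|}z\,\delta^2(y)$ that you isolate as a lemma is exactly what the paper uses implicitly to arrive at its four-term expansions of the nested brackets.
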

\begin{proof}
It is easy to see that the bracket (\ref{graded commutator}) is graded skew-symmetric, i.e. $[u,v]_\delta=-(-1)^{|u|\,|v|}[v,u]_\delta$ and $|[u,v]|=|u|+|v|$. Then the double brackets can be expressed as follows
\begin{eqnarray}
\big{[}w,[u,v]_\delta\big{]}_\delta &=& w\,u\,\delta^2(v)-\delta(w)\,u\,\delta(v)-w\,\delta^2(u)\,v+\delta(w)\,\delta(u)\,v,\nonumber\\
\big{[}[w,u]_\delta,v\big{]}_\delta &=& -w\,\delta^2(u)\,v+w\,\delta(u)\,\delta(v)+\delta^2(w)\,u\,v-\delta(w)\,u\,\delta(v),\nonumber\\
(-1)^{|u|\,|w|}\big{[}u,[w,v]_\delta\big{]}_\delta &=& w\,u\,\delta^2(v)-w\,\delta(u)\,\delta(v)-\delta^2(w)\,u\,v+\delta(w)\delta(u)\,v.\nonumber
\end{eqnarray}
From this it follows that the graded bracket (\ref{graded commutator}) satisfies the graded Jacobi identity
$$
[w,[u,v]_\delta]_\delta=[[w,u]_\delta,v]_\delta+(-1)^{|u|\,|w|}[u,[w,v]_\delta]_\delta.
$$
Hence the graded bracket (\ref{graded commutator}) defines the Lie superalgebra on commutative superalgebra $\cal A$. We check the compatibility condition (\ref{condition of compatibility}) by straightforward calculations. Indeed we find
\begin{eqnarray}
2\,w\cdot[u,v]_\delta &=& 2\,w\cdot u\cdot\delta(v)-2\,w\cdot \delta(u)\cdot v,\nonumber\\
{[}w\cdot u,v]_\delta &=& w\cdot u\cdot \delta(v)-\delta(w)\cdot u\cdot v - w\cdot\delta(u)\cdot v,\nonumber\\
(-1)^{|u|\,|w|}[u,w\cdot v]_\delta &=& -w\cdot\delta(u)\cdot v+\delta(w)\cdot u\cdot v+w\cdot u\cdot\delta(v).\nonumber
\end{eqnarray}
Thus a commutative superalgebra $\cal A$ endowed with the graded Lie bracket (\ref{graded commutator}) is a transposed Poisson superalgebra.
\end{proof}
\section{Classification of Transposed Poisson Superalgebras in Dimension (1,1)}
In this section we consider a low-dimensional case of a transposed Poisson superalgebra over the field of complex numbers and give a classification of transposed Poisson superalgebras in this case. We consider transposed Poisson superalgebras $\mathfrak G=\mathfrak G_0\oplus\mathfrak G_1$, where $\mbox{dim}\,\mathfrak G_0=\mbox{dim}\,\mathfrak G_1=1$. Let ${ e}_0,{e}_1$ be a basis for a super vector space $\mathfrak G$ such that $e_0$ is an even degree element which spans the even subspace $\mathfrak G_0$ and $e_1$ is an odd degree element which spans the odd subspace $\mathfrak G_1.$ We assume that there is a commutative associative multiplication on a super vector space $\mathfrak G$, which will be denoted by $(x,y)\to x\cdot y, x,y\in\mathfrak G$. From the structure of superalgebra and commutativity of a multiplication it follows
\begin{equation}
e_0\cdot e_0=a\,e_0,\;\; e_0\cdot e_1=e_1\cdot e_0=b\,e_1,\;\;e_1\cdot e_1=0,
\end{equation}
where $a,b$ are complex numbers. In this simple case there is only one combination of three generators which leads to a non-trivial condition derived from associativity. This combination is $e_0,e_0,e_1$. We have
\begin{eqnarray}
e_0\cdot (e_0\cdot e_1)=(e_0\cdot e_0)\cdot e_1\;\Rightarrow\;\;b\;e_0\cdot e_1=a\,e_0\cdot e_1\;\Rightarrow\;b^2\,e_1=ab\,e_1.
\end{eqnarray}
Hence a multiplication is associative in three cases $a=b=0$, $a=b\neq 0$ and $a\neq 0, b=0$.

Now we consider the second structure of a transposed Poisson superalgebra, that is, a Lie superalgebra. It is known that in the case of Lie superalgebras of (1,1)-type there are three non-isomorphic Lie superalgebras
\begin{itemize}
\item[I)] $[e_0,e_0]=0,\;\; [e_0,e_1]=0,\;\; [e_1,e_1]=0,$
\item[II)] $[e_0,e_0]=0,\;\; [e_0,e_1]=0,\;\; [e_1,e_1]=e_0,$
\item[III)] $[e_0,e_0]=0,\;\; [e_0,e_1]=e_1,\;\; [e_1,e_1]=0.$
\end{itemize}
In the case of the Abelian Lie superalgebra I we have two transposed Poisson superalgebras
\begin{eqnarray}
    && e_0\cdot e_0=e_0,\;e_0\cdot e_1=e_1\cdot e_0=0,\;e_1\cdot e_1=0,\nonumber\\
    && [e_0,e_0]=0,\;[e_0,e_1]=0,\;[e_1,e_1]=0,\nonumber
\end{eqnarray}
and
\begin{eqnarray}
    && e_0\cdot e_0=e_0,\;e_0\cdot e_1=e_1\cdot e_0=e_1,\;e_1\cdot e_1=0,\nonumber\\
    && [e_0,e_0]=0,\;[e_0,e_1]=0,\;[e_1,e_1]=0.\nonumber
\end{eqnarray}
In the case of the Lie superalgebra II we have $a=b=0$. Hence we have only one transposed Poisson superalgebra
\begin{eqnarray}
    && e_0\cdot e_0=e_0,\;e_0\cdot e_1=0\cdot e_0=0,\;e_1\cdot e_1=0,\nonumber\\
    && [e_0,e_0]=0,\;[e_0,e_1]=0,\;[e_1,e_1]=e_0.\nonumber
\end{eqnarray}
In the case of the Lie superalgebra III it follows from the compatibility condition $a=b$ and we get one more transposed Poisson superalgebra
\begin{eqnarray}
    && e_0\cdot e_0=e_0,\;e_0\cdot e_1=e_1\cdot e_0=e_1,\;e_1\cdot e_1=0,\nonumber\\
    && [e_0,e_0]=0,\;[e_0,e_1]=e_1,\;[e_1,e_1]=0.\nonumber
\end{eqnarray}
\section{Infinite Dimensional Transposed Poisson Superalgebras}
In this section we apply Theorem (\ref{theorem 1}) to construct a transposed Poisson superalgebra by means of differential forms on a smooth $n$-dimensional manifold $M^n$.

Let us consider a graded commutative differential algebra $({\mathfrak A},d)$, where ${\mathfrak A}=\oplus_{p\in {\mathbb Z}} {\mathfrak A}^p$, and $d:{\mathfrak A}^i\to {\mathfrak A}^{i+1}$ is a differential of $\mathfrak A$, that is, an anti-derivation of degree 1. Let us denote a multiplication in $\mathfrak A$ by $(u,v)\to u\cdot v$, where $u,v\in{\mathfrak A}$ and the degree of a homogeneous element $u$ by $|u|$. Then for $u\in{\mathfrak A}^p$ we have $|u|=p$ and
$$
|u\cdot v|=|u|+|v|,\;\;u\cdot v=(-1)^{|u|\,|v|}v\cdot u,\;\; d(u\cdot v)=(du)\cdot v+(-1)^{|u|}u\cdot dv.
$$
Assume that $\delta$ is a degree (-1) anti-derivation of $\mathfrak A$, i.e. $\delta:{\mathfrak A}^p\to{\mathfrak A}^{p-1}$ and $\delta(u\cdot v)=(\delta u)\cdot v+(-1)^{|u|}u\cdot \delta v$. Then $D=d\circ\delta+\delta\circ d$ is the 0-degree derivation of $\mathfrak A$.
We can consider $\mathfrak A$ as a commutative superalgebra if we define the parity of a homogeneous element $u$ as $|u|(\mbox{mod}\;2)$. Then $d,\delta$ are odd derivations and $D$ is the even derivation of $\mathfrak A$. Define the bracket by
\begin{equation}
[u,v]_D=u\cdot D(v)-(-1)^{|u|\,|v|}\,v\cdot D(u).
\label{bracket with D}
\end{equation}
Then according to Theorem \ref{theorem 1} the triple $({\mathfrak A},\cdot,[\,,\,]_D)$ is a transposed Poisson superalgebra.

As an example of the structure described above we consider the algebra of differential forms on a smooth $n$-dimensional manifold $M^n$. Let $\Omega(M^n)=\oplus_p \Omega^p(M^n)$ be the graded algebra of differential forms on $M^n$. If we denote by ${\cal F}(M^n)$ the commutative algebra of smooth functions on a manifold $M^n$ then $\Omega^0(M^n)$ is identified with the algebra of functions ${\cal F}(M^n)$, i.e. $\Omega^0(M^n)={\cal F}(M^n)$. The graded algebra of differential forms $\Omega(M^n)$ can be considered as a superalgebra if we define the degree of $p$-form $\omega$ by $|\omega|=p\; (\mbox{mod}\;2)$. Then it follows from the property of the wedge product of two forms $\omega\wedge\theta=(-1)^{pq}\theta\wedge\omega$, where $\omega\in\Omega^p(M^n),\theta\in\Omega^q(M^n)$, that $\Omega(M^n)$
is a commutative superalgebra. Let $\tt X$ be a smooth vector field on a manifold $M^n$. Then the operator of contraction of a differential form with a vector field $i_{\mathtt X}:\Omega^{p}\to\Omega^{p-1}$ and the exterior differential $d:\Omega^{p}\to \Omega^{p+1}$ are odd derivations of the commutative superalgebra of differential forms. Then the Lie derivative
$$
{\cal L}_{\tt X}=d\circ i_{\mathtt X}+i_{\mathtt X}\circ d
$$
is an even degree derivation of the commutative superalgebra of differential forms, that is,  ${\cal L}_{\tt X}:\Omega^p(M^n)\to\Omega^p(M^n)$ and ${\cal L}_{\tt X}(\omega\wedge\theta)={\cal L}_{\tt X}(\omega)\wedge \theta+\omega\wedge {\cal L}_{\tt X}(\theta)$. Then the bracket (\ref{bracket with D}) takes on the form
 \begin{eqnarray}
[\omega,\theta]_{\tt X}=\omega\wedge {\cal L}_{\tt X}(\theta)-(-1)^{|\omega|\,|\theta|}\theta\wedge {\cal L}_{\tt X}(\omega),
\label{graded bracket of forms}
 \end{eqnarray}
and the commutative superalgebra of differential forms $\Omega(M^n)$ endowed with the bracket (\ref{graded bracket of forms}) is a transposed Poisson superalgebra. Particularly if we consider two 0-forms, that is, two functions $f,g\in{\cal F}(M^n)$ then the graded Lie bracket (\ref{graded bracket of forms}) takes on the form
$$
[f,g]_{\tt X}=f\,{\cal L}_{\tt X}(g)-g\,{\cal L}_{\tt X}(f)=f\,{\tt X}(g)-g\,{\tt X}(f).
$$
Hence the graded bracket (\ref{graded bracket of forms}) restricted to the subalgebra of smooth functions ${\cal F}(M^n)$ makes it a transposed Poisson algebra (\ref{bracket of functions}). Let $\Omega^p_{\tt c}(M^n)$ be the vector space of closed $p$-differential forms and $\Omega^p_{\tt e}(M^n)\subset \Omega^p_{\tt c}(M^n)$ be the vector space of exact $p$-differential forms.
\begin{proposition}
If $\omega\in\Omega^p_{\tt c}(M^n), \theta\in\Omega^q_{\tt c}(M^n)$ then $[\omega,\theta]_{\tt X}\in \Omega^{p+q}_{\tt c}(M^n)$, i.e. the graded Lie bracket (\ref{graded bracket of forms}) of two closed forms is a closed form. If $\omega\in \Omega^p_{\tt e}(M^n)$ and  $\theta\in\Omega^q_{\tt c}(M^n)$ then $[\omega,\theta]_{\tt X}\in \Omega^{p+q}_{\tt e}(M^n)$, i.e. if one of two closed differential forms is exact then the graded Lie bracket (\ref{graded bracket of forms}) of these two forms is an exact differential form.
\label{proposition 1}
\end{proposition}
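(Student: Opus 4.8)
The plan is to prove both assertions by a direct computation with the exterior differential $d$, resting on two structural properties of the Lie derivative. The first is Cartan's formula $\mathcal{L}_{\mathtt X}=d\circ i_{\mathtt X}+i_{\mathtt X}\circ d$, which is the definition used above. The second, which I would record at the outset, is that $d$ commutes with $\mathcal{L}_{\mathtt X}$: since $d^2=0$, both $d\circ\mathcal{L}_{\mathtt X}$ and $\mathcal{L}_{\mathtt X}\circ d$ reduce to $d\circ i_{\mathtt X}\circ d$, so $d\,\mathcal{L}_{\mathtt X}=\mathcal{L}_{\mathtt X}\,d$. The decisive corollary is that whenever $\omega$ is closed one has $\mathcal{L}_{\mathtt X}\omega=d(i_{\mathtt X}\omega)$, so the Lie derivative sends closed forms to exact forms; this observation carries the whole argument.

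For the first assertion I would apply $d$ to the bracket and expand both products by the graded Leibniz rule. In the contribution from $\omega\wedge\mathcal{L}_{\mathtt X}\theta$ the factor $d\omega$ vanishes because $\omega$ is closed, while the surviving factor $d(\mathcal{L}_{\mathtt X}\theta)=\mathcal{L}_{\mathtt X}(d\theta)$ vanishes because $\theta$ is closed; the contribution from $\theta\wedge\mathcal{L}_{\mathtt X}\omega$ is killed symmetrically. Hence $d[\omega,\theta]_{\mathtt X}=0$, and the bracket of two closed forms is closed.

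For the second assertion I would write $\omega=d\alpha$, keep $\theta$ closed, and show that each of the two summands of $[\omega,\theta]_{\mathtt X}$ is separately exact by exhibiting an explicit primitive. For the first summand I would compute $d(\alpha\wedge\mathcal{L}_{\mathtt X}\theta)$; the term carrying $d(\mathcal{L}_{\mathtt X}\theta)=\mathcal{L}_{\mathtt X}(d\theta)$ drops out by closedness of $\theta$, leaving exactly $\omega\wedge\mathcal{L}_{\mathtt X}\theta$. For the second summand I would use $\mathcal{L}_{\mathtt X}\omega=\mathcal{L}_{\mathtt X}(d\alpha)=d(\mathcal{L}_{\mathtt X}\alpha)$ together with $d\theta=0$ to write $\theta\wedge\mathcal{L}_{\mathtt X}\omega$ as a sign times $d(\theta\wedge\mathcal{L}_{\mathtt X}\alpha)$. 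Combining the two primitives, $[\omega,\theta]_{\mathtt X}$ equals $d$ of an explicit form and is therefore exact.

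No deep difficulty arises; the one point that needs care is the bookkeeping of the Koszul signs in the graded Leibniz rule and in the skew-symmetry coefficient $(-1)^{|\omega|\,|\theta|}$, so that the two primitives assemble into a single well-defined potential. All the real content is encoded in the identity $d\,\mathcal{L}_{\mathtt X}=\mathcal{L}_{\mathtt X}\,d$ and its consequence that $\mathcal{L}_{\mathtt X}$ maps closed forms to exact ones.
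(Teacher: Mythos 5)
Your proof is correct and takes essentially the same route as the paper's: both arguments rest on Cartan's formula, $d^2=0$, and the commutation $d\circ\mathcal{L}_{\mathtt X}=\mathcal{L}_{\mathtt X}\circ d$, and the combined primitive you obtain, $\alpha\wedge\mathcal{L}_{\mathtt X}\theta-(-1)^{|\alpha|\,|\theta|}\,\theta\wedge\mathcal{L}_{\mathtt X}\alpha$, is exactly the one the paper exhibits (with $\zeta$ in place of $\alpha$). The only differences are bookkeeping choices — you invoke $d\mathcal{L}_{\mathtt X}=\mathcal{L}_{\mathtt X}d$ in the first part where the paper substitutes $\mathcal{L}_{\mathtt X}=d\circ i_{\mathtt X}$ on closed forms, and you treat the two summands separately before combining — so nothing needs to change.
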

\begin{proof}
Let $\omega\in\Omega^p_{\tt c}(M^n), \theta\in\Omega^q_{\tt c}(M^n)$, that is, $d\omega=d\theta=0$. We have
\begin{eqnarray}
[\omega,\theta]_{\tt X} &=& \omega\wedge{\cal L}_{\tt X}(\theta)-(-1)^{|\omega|\,|\theta|}\theta\wedge{\cal L}_{\tt X}(\omega)\nonumber\\
                     &=& \omega\wedge i_{\tt X}\circ d(\theta)+\omega\wedge d\circ i_{\tt X}(\theta)-(-1)^{|\omega|\,|\theta|}(\theta\wedge i_{\tt X}\circ d(\omega)+\theta\wedge d\circ i_{\tt X}(\omega))\nonumber\\
             &=& \omega\wedge d\circ i_{\tt X}(\theta)-(-1)^{|\omega|\,|\theta|} \theta\wedge d\circ i_{\tt X}(\omega).\nonumber
\end{eqnarray}
Now differentiating the graded Lie bracket we obtain
\begin{eqnarray}
d([\omega,\theta]_{\tt X}) &=& d(\omega)\wedge d\circ i_{\tt X}(\theta)+(-1)^{|\omega|}\omega\wedge d^2\circ i_{\tt X}(\theta)\nonumber\\
    &&-(-1)^{|\omega|\,|\theta|} d(\theta)\wedge d\circ i_{\tt X}(\omega)-(-1)^{(|\omega|+1)\,|\theta|} \theta\wedge d^2\circ i_{\tt X}(\omega)=0\nonumber
\end{eqnarray}
If $\omega$ is an exact form, that is, $\omega\in\Omega^p_{\tt e}(M^n), \omega=d\zeta, \zeta\in \Omega^{p-1}(M^n)$ then making use of $d\circ{\cal L}_{\tt X}={\cal L}_{\tt X}\circ d$ we get
\begin{eqnarray}
[\omega,\theta]_{\tt X} = [d\zeta,\theta]_{\tt X} &=& d\zeta\wedge {\cal L}_{\tt X}(\theta)-(-1)^{|\omega|\,|\theta|}\theta\wedge{\cal L}_{\tt X}(d\zeta)\nonumber\\
   &=& d\zeta\wedge {\cal L}_{\tt X}(\theta)-(-1)^{|\omega|\,|\theta|}\theta\wedge d\circ {\cal L}_{\tt X}(\zeta)\nonumber\\
   &=& d\big(\zeta\wedge{\cal L}_{\tt X}(\theta)-(-1)^{|\zeta|\,|\theta|}\theta\wedge {\cal L}_{\tt X}(\zeta)\big).\nonumber
\end{eqnarray}
\end{proof}
Thus the proved proposition shows that the graded Lie bracket (\ref{graded bracket of forms}) induces the graded Lie bracket on the graded algebra of de Rham cohomologies, which defines the structure of transposed Poisson superalgebra.

Our next example of a transposed Poisson superalgebra is related to the quantum theory of Yang-Mills fields \cite{Slavnov-Faddeev}. The gauge group of the Yang-Mills theory is $\mbox{SU}(2)$. Let ${\mathfrak{su}}(2)$ be the Lie algebra of $\mbox{SU}(2)$, $t_a$ be a basis for ${\mathfrak{su}}(2)$ and $[t_a,t_b]=K^c_{ab}t_c$, i.e. $K^c_{ab}$ are structure constants of ${\mathfrak{su}}(2)$. The basic fields of the quantum Yang-Mills theory are ${\mathfrak{su}}(2)$-valued functions $A_\mu, c, \bar c$ defined on a space-time, where $A_\mu$ ($\mu=0,1,2,3$) are Yang-Mills fields, $c,\bar c$ are ghost fields. Since the basic fields are ${\mathfrak{su}}(2)$-valued functions we can express them in the terms of a basis as follows $A_\mu=A^a_\mu\,t_a,c=c^a\,t_a,\bar c=\bar c^a\,t_a$. We consider the algebra of fields of the quantum Yang-Mills theory as an algebra generated by $A^a_\mu,b^a, c^a,\bar c^a$, their all orders partial derivatives with respect to coordinates of space-time and Grassmann variables $\xi,\eta,\ldots$ which do not depend on a point of space-time. We assume that the gauge fields $A^a_\mu$ and an auxiliary field $b^a$ and all their partial derivatives are commuting generators of algebra and they also commute with ghost fields, their partial derivatives and Grassmann variables $\xi,\eta,\ldots$. The ghost fields $c^a,\bar c^a$, their partial derivatives and $\xi,\eta,\ldots$ are generators of Grassmann type, that is, they anticommute with each other. Now attributing degree zero to the gauge fields $A^a_\mu$ and an auxiliary field $b^a$ (and their all partial derivatives), degree one to ghost fields (and their partial derivatives) and Grassmann variables $\xi,\eta,\ldots$, we turn the algebra of the fields of quantum Yang-Mills theory into a commutative superalgebra. An important role in the quantum theory of Yang-Mills fields is played by BRST supersymmetry, which has the form \cite{Slavnov-Faddeev}
\begin{eqnarray}
A^a_\mu\;\to\;A^a_\mu+\delta A^a_\mu,\;c^a\;\to\; c^a+\delta c^a,\;\bar c^a\; \to\;\bar c^a+\delta \bar c^a,
\end{eqnarray}
where
\begin{equation}
\delta A^a_\mu=\xi\,D_\mu c^a,\;\;\delta c^a=-\frac{1}{2}\,K^a_{bd}\,\xi\,c^b\,c^d\;,\;
    \delta\bar c^a=\xi\,b^a,
\end{equation}
and $D_\mu c^a=\partial_\mu c^a+K^a_{bd}A^b_\mu\,c^d$ is the covariant derivative.
The BRST supersymmetry can be considered as an even derivation of the commutative superalgebra of the fields of the quantum Yang-Mills theory. Indeed let us define the mapping $s$ on the basic fields as follows
\begin{eqnarray}
s A^a_\mu=(D_\mu c)^a\;,\; s c^a=-\frac{1}{2}\,K^a_{bd}\,c^b\,c^d\;,\;s\bar c^a=b^a\;,\;s b^a=0,
\end{eqnarray}
and extend this mapping to the odd derivation of the commutative superalgebra of fields by assuming that it acts on products of fields according to the graded Leibniz rule. We also assume that the derivation $s$ commutes with partial derivatives and Grassmann variables $\xi,\eta,\ldots$, i.e. $s\circ\partial_\mu=\partial_\mu\circ s, s \xi=\xi s$. It can be verified that the odd derivation $s$ has a very important property $s^2=0$. Then $\delta=\xi\,s$ is the even derivation of the commutative superalgebra of fields. Indeed for any product of basic fields $\phi,\psi$ we have
\begin{eqnarray}
\delta(\phi\,\psi) &=& \xi\,s (\phi\,\psi)=\xi\big(\delta(\phi)\,\psi+(-1)^{|\phi|}\phi\,\delta(\psi)\big)\nonumber\\
    &=& \xi\, \delta(\phi)\,\psi+(-1)^{|\phi|}\xi\,\phi\,\delta(\psi)\nonumber\\
    &=& \xi\, \delta(\phi)\,\psi+(-1)^{2|\phi|}\phi\,\xi\,\delta(\psi)=\delta(\phi)\,\psi+\phi\,\delta (\psi).\nonumber
\end{eqnarray}
Let $\Phi,\Psi$ be two elements of the commutative superalgebra of the fields of quantum Yang-Mills theory, that is, finite polynomials on the basic fields and their partial derivatives. We define the graded Lie bracket of these two polynomials by
\begin{equation}
[\Phi,\Psi]_\delta=\Phi\,\delta(\Psi)-(-1)^{|\Phi|\,|\Psi|}\,\Psi\,\delta(\Phi).
\label{graded Lie bracket for fields}
\end{equation}
Then the algebra of the fields of quantum Yang-Mills theory endowed with the graded Lie bracket (\ref{graded Lie bracket for fields}) is a transposed Poisson superalgebra.
\section{Identities in Transposed Poisson Superalgebra}
The aim of this section is to show that the identities, which hold in the case of a transposed Poisson algebra \cite{Bai-Guo-Wu-2020}, can be extended to a transposed Poisson superalgebra if we modify the identities proposed in \cite{Bai-Guo-Wu-2020}  with the help of the rule of signs.
\begin{theorem}
Let $({\mathfrak G},\cdot,[\;,\;])$ be a transposed Poisson superalgebra. Then for any $h,x,y,z\in{\mathfrak G}$ we have the following identities
\begin{eqnarray}
(-1)^{|x|\,|z|}\,x\cdot[y,z]+(-1)^{|x|\,|y|}\,y\cdot[z,x]+(-1)^{|y|\,|z|}\,z\cdot[x,y]\!\!\! &=&\!\!\! 0,\label{identity 1}\\
(-1)^{|x|\,|z|}\,[h\cdot[x,y],z]+(-1)^{|x|\,|y|}\,[h\cdot[y,z],x]+(-1)^{|y|\,|z|}\,[h\cdot[z,x],y]\!\!\! &=&\!\!\! 0,\label{identity 2}\\
(-1)^{|x|\,|z|}\,[h\cdot x,[y,z]]+(-1)^{|x|\,|y|}\,[h\cdot y,[z,x]]+(-1)^{|y|\,|z|}\,[h\cdot z,[x,y]]\!\!\! &=&\!\!\! 0,\label{identity 3}\\
(-1)^{|x|\,|z|}\,[h,x]\,[y,z]+(-1)^{|x|\,|y|}\,[h,y]\,[z,x]+(-1)^{|y|\,|z|}\,[h,z]\,[x,y]\!\!\! &=& 0,
\label{identity 4}\\
2\,u\cdot v\cdot[x,y]=(-1)^{|x|\,|v|}[u\cdot x,v\cdot y]+(-1)^{|u|\,(|x|+|v|)}[v\cdot x,u\cdot y],
\label{identity 5}\\
(-1)^{|u|\,|yv|}\,x\cdot[u,y\cdot v]+(-1)^{|v|\,|xy|}\,v\cdot[x\cdot y,u]+(-1)^{|x|\,|yv|}y\cdot[v,x]\cdot u\!\!\! &=&\!\!\! 0.
\label{identity 6}
\end{eqnarray}
\label{Theorem 2}
\end{theorem}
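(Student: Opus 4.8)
The plan is to treat (\ref{identity 1}) as the fundamental identity and to derive the remaining five from it, using throughout only the four structural facts available: graded commutativity and associativity of $\cdot$, graded skew-symmetry and the graded Jacobi identity for $[\,,\,]$, and the compatibility condition (\ref{condition of compatibility}). First I would prove (\ref{identity 1}). Applying (\ref{condition of compatibility}) to each of the three summands rewrites $2\cdot(\text{LHS})$ as a sum of six brackets of the form $[\,\text{(product)},\text{(element)}]$ and $[\,\text{(element)},\text{(product)}]$; the six Koszul prefactors are engineered so that, after reordering the inner products via $x\cdot y=(-1)^{|x||y|}y\cdot x$ and using graded skew-symmetry $[u,v]=-(-1)^{|u||v|}[v,u]$, the terms cancel in three pairs (the $[xy,z]$ coming from the first summand cancels the $[z,yx]$ coming from the second, and cyclically). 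This is the one place where graded commutativity of $\cdot$ enters decisively, and it explains why the signs in (\ref{identity 1})--(\ref{identity 4}) are attached to the complementary pair of indices in each cyclic term.

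Next I would dispatch (\ref{identity 5}), which needs only (\ref{condition of compatibility}) applied in two different ways. Computing $4\,u\cdot v\cdot[x,y]$ by peeling off the factors one at a time, i.e. writing $u\cdot v\cdot[x,y]=u\cdot\big(v\cdot[x,y]\big)$ and applying (\ref{condition of compatibility}) three times, produces four terms: $[uvx,y]$, $(-1)^{|x||v|}[ux,vy]$, the mixed bracket $(-1)^{\ldots}[vx,uy]$, and $(-1)^{|x|(|u|+|v|)}[x,uvy]$. Computing $2\,u\cdot v\cdot[x,y]$ instead by treating $u\cdot v$ as a single factor in (\ref{condition of compatibility}) gives precisely $[uvx,y]+(-1)^{|x|(|u|+|v|)}[x,uvy]$. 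Subtracting the second from the first isolates the two mixed brackets and yields (\ref{identity 5}); the only care needed is the sign $(-1)^{|u||v|}$ arising from $v\cdot u=(-1)^{|u||v|}u\cdot v$. Identity (\ref{identity 6}) I would obtain in the same spirit, by feeding the composite elements $y\cdot v$ and $x\cdot y$ into (\ref{identity 1}) (or (\ref{identity 5})) and then using (\ref{condition of compatibility}) to expand $[u,y\cdot v]$ and $[x\cdot y,u]$; the product term $y\cdot[v,x]\cdot u$ is what remains after the triple products recombine.

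The substantive work is (\ref{identity 2})--(\ref{identity 4}). Applying (\ref{condition of compatibility}) to peel $h$ off the product $h\cdot[x,y]$ shows that the cyclic sum in (\ref{identity 2}) and the cyclic sum in (\ref{identity 3}) differ only by $2h\cdot\sum_{cyc}(-1)^{|x||z|}[[x,y],z]$, which vanishes by the cyclic form of the graded Jacobi identity; hence (\ref{identity 2}) and (\ref{identity 3}) are equivalent and it suffices to prove one of them. I would prove (\ref{identity 2}) by the same mechanism as (\ref{identity 1}): expand each $h\cdot[x,y]$ by (\ref{condition of compatibility}) into two brackets, so that every summand becomes a double bracket; then resolve the double brackets with the graded Jacobi identity and reorganize, whereupon graded commutativity of $\cdot$ (reordering the inner products) forces the cyclic cancellation. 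Identity (\ref{identity 4}), whose summands are products $[h,x]\cdot[y,z]$ of two brackets, I expect to follow along the same lines, expanding each product with (\ref{condition of compatibility}) and reducing to the cyclic sums already shown to vanish.

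The main obstacle is not conceptual but organizational: every expansion multiplies the number of Koszul signs, and the cancellations in (\ref{identity 2})--(\ref{identity 4}) close up only because graded commutativity of $\cdot$ is invoked --- the graded Jacobi identity \emph{alone} merely makes these cyclic sums equal to one another rather than equal to zero, which is exactly what the equivalence of (\ref{identity 2}) and (\ref{identity 3}) already signals. I therefore expect the delicate points to be verifying that the residual double-bracket sums collapse once graded commutativity is applied, and keeping all sign conventions consistent with those fixed in the proof of (\ref{identity 1}); the analogous ungraded identities of \cite{Bai-Guo-Wu-2020} serve as the template against which to check the signs.
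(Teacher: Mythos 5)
Your proposal is correct and takes essentially the same route as the paper: identity (\ref{identity 1}) by cyclic application of the compatibility condition with pairwise cancellation; identities (\ref{identity 2})--(\ref{identity 3}) by producing two linear relations between their cyclic sums $S_2,S_3$ (your equivalence $S_2=S_3$ is exactly the paper's relation (\ref{relation 1}), and your compatibility-then-Jacobi expansion of $[h\cdot[x,y],z]$ yields $2S_2=S_3$, which is the paper's relation (\ref{relation 5}) derived with the two steps in the opposite order); identity (\ref{identity 4}) by reduction to (\ref{identity 1}) and (\ref{identity 2}); identity (\ref{identity 5}) by the two ways of applying compatibility to $u\cdot v\cdot[x,y]$; and identity (\ref{identity 6}) from compatibility on the composite products combined with (\ref{identity 5}). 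The one point to adjust is attribution: your expansion mechanism for (\ref{identity 2}) does not ``cancel cyclically'' on its own but produces $2S_2=S_3$, so the vanishing comes from subtracting this from $S_2=S_3$ --- it is the factor $2$ in the compatibility condition, not graded commutativity of the product, that makes the sums collapse, precisely as in the paper's subtraction of (\ref{relation 1}) from (\ref{relation 5}).
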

\begin{proof}
 In order to prove identity (\ref{identity 1}) we make use of the compatibility condition (\ref{condition of compatibility}) as follows
\begin{eqnarray}
    (-1)^{|x|\,|z|}\;2x\cdot [y,z] &=& (-1)^{|x|\,|z|}[x\cdot y,z]+(-1)^{|x|\,|z|+|x|\,|y|}[y,x\cdot z],\nonumber\\
    (-1)^{|x|\,|y|}\;2y\cdot [z,x] &=& (-1)^{|x|\,|y|}[y\cdot z,x]+(-1)^{|x|\,|y|+|y|\,|z|}[x,y\cdot x],\nonumber\\
    (-1)^{|y|\,|z|}\;2z\cdot [x,y] &=& (-1)^{|y|\,|z|}[z\cdot x,y]+(-1)^{|y|\,|z|+|x|\,|z|}[x,z\cdot y].\nonumber
\end{eqnarray}
Making use of the commutativity of a superalgebra $\mathfrak G$ and the graded skew-symmetry of bracket we can represent the right-hand sides of the above identities as follows
\begin{eqnarray}
    (-1)^{|x|\,|z|}\;2x\cdot [y,z] &=& (-1)^{|x|\,|z|}[x\cdot y,z]-(-1)^{|y|\,|z|}[z\cdot x,y],\nonumber\\
    (-1)^{|x|\,|y|}\;2y\cdot [z,x] &=& (-1)^{|x|\,|y|}[y\cdot z,x]-(-1)^{|x|\,|z|}[x\cdot y,z],\nonumber\\
    (-1)^{|y|\,|z|}\;2z\cdot [x,y] &=& (-1)^{|y|\,|z|}[z\cdot x,y]-(-1)^{|x|\,|y|}[y\cdot z,x].\nonumber
\end{eqnarray}
Summing up the left-hand sides and right-hand sides of the obtained identities, we get identity (\ref{identity 1}). The proof of identity (\ref{identity 2}) is similar to the proof of identity (\ref{identity 1}). First we use the compatibility condition (\ref{condition of compatibility}) to obtain the following three identities
\begin{eqnarray}
(-1)^{|x|\,|z|}\,2\;h\cdot[[x,y],z] &=& (-1)^{|x|\,|z|}\,[h\cdot[x,y],z]+(-1)^{|x|\,|z|+(xy,h)}[[x,y],h\cdot z],\nonumber\\
(-1)^{|x|\,|y|}\,2\;h\cdot[[y,z],x] &=& (-1)^{|x|\,|y|}\,[h\cdot[y,z],x]+(-1)^{|x|\,|y|+(yz,h)}[[y,z],h\cdot x],\nonumber\\
(-1)^{|y|\,|z|}\,2\;h\cdot[[z,x],y] &=& (-1)^{|y|\,|z|}\,[h\cdot[z,x],y]+(-1)^{|y|\,|z|+(xz,h)}[[z,x],h\cdot y],\nonumber
\end{eqnarray}
where $(xy,h)=|x|\,|h|+|y|\,|h|$,$(yz,h)=|y|\,|h|+|z|\,|h|$,$(xz,h)=|x|\,|h|+|z|\,|h|$.
Due to the super Jacobi identity the sum of left-hand sides of the above identities is equal to zero. Hence we get
\begin{eqnarray}
&& (-1)^{|x|\,|z|}\,[h\cdot[x,y],z]+(-1)^{|x|\,|y|}\,[h\cdot[y,z],x]+(-1)^{|y|\,|z|}\,[h\cdot[z,x],y]\nonumber\\
      &&\quad\quad\quad\qquad +(-1)^{|x|\,|z|+(xy,h)}[[x,y],h\cdot z]+(-1)^{|x|\,|y|+(yz,h)}[[y,z],h\cdot x]\nonumber\\
      &&\qquad\qquad\qquad\qquad\qquad\qquad\qquad\qquad\qquad+(-1)^{|y|\,|z|+(xz,h)}[[z,x],h\cdot y]=0.\label{relation 1}
\end{eqnarray}
The sum of the first three terms in the above identity is the left-hand side of identity (\ref{identity 2}). Thus in order to prove identity (\ref{identity 2}) we need to show that the sum of the last three terms in the above identity is zero. For the first one of them, that is, $[[x,y],h\cdot z]$ (we temporarily omit the coefficient $(-1)^{|x|\,|z|+(xy,h)}$ , which will be taken into account later) we have the super Jacobi identity
\begin{equation}
(-1)^{|x|\,|h|+|x|\,|z|}[[x,y],h\cdot z]+(-1)^{|x|\,|y|}[[y,h\cdot z],x]+(-1)^{|y|\,|h|+|y|\,|z|}[[h\cdot z,x],y]=0.
\label{super Jacobi identity}
\end{equation}
Applying to the middle term at the left-hand side of this equality the compatibility condition
$$
2\,h\cdot [y,z]=[h\cdot y,z]+(-1)^{|y|\,|h|}[y,h\cdot z]\;\;\Rightarrow\;\;[y,h\cdot z]=(-1)^{|y|\,|h|}(2\,h\cdot [y,z]-[h\cdot y,z]),
$$
and substituting the obtained expression multiplied by $(-1)^{|x|\,|z|+(xy,h)}$ into super Jacobi identity (\ref{super Jacobi identity}) we get
\begin{eqnarray}
(-1)^{|y|\,|h|}[[x,y],h\cdot z]+(-1)^{(yzh,x)}([2\,h\cdot [y,z],x]\!\!\! \!&-&\!\!\!\! [[h\cdot y,z],x])\nonumber\\
    &+&\!\!\!\!\!(-1)^{|y|\,|z|+(zh,x)}[[h\cdot z,x],y]=0,\label{relation 2}
\end{eqnarray}
where $(yzh,x)=|y|\,|x|+|z|\,|x|+|h|\,|x|$ and $(zh,x)=|z|\,|x|+|h|\,|x|$. Applying the same calculations to the next two terms in (\ref{relation 1}), we get two more identities
\begin{eqnarray}
(-1)^{|z|\,|h|}[[y,z],h\cdot x]+(-1)^{(xzh,y)}([2\,h\cdot [z,x],y]\!\!\! \!&-&\!\!\!\! [[h\cdot z,x],y])\nonumber\\
    &+&\!\!\!\!\!(-1)^{|x|\,|z|+(xh,y)}[[h\cdot x,y],z]=0,\label{relation 3}\\
(-1)^{|x|\,|h|}[[y,z],h\cdot x]+(-1)^{(xyh,z)}([2\,h\cdot [z,x],y]\!\!\! \!&-&\!\!\!\! [[h\cdot z,x],y])\nonumber\\
    &+&\!\!\!\!\!(-1)^{|x|\,|y|+(yh,z)}[[h\cdot x,y],z]=0,\label{relation 4}
\end{eqnarray}
Summing up the identities (\ref{relation 2}), (\ref{relation 3}), (\ref{relation 4}) multiplied by $(-1)^{|x|\,(|z|+|h|)},(-1)^{|y|\,(|x|+|h|)}$ and $(-1)^{|z|\,(|y|+|h|)}$ respectively, we obtain
\begin{eqnarray}
&&(-1)^{|x|\,|z|+(xy,h)}[[x,y],h\cdot z]+(-1)^{|x|\,|y|+(yz,h)}[[y,z],h\cdot x]+(-1)^{|y|\,|z|+(xz,h)}[[z,x],h\cdot y]\nonumber\\
  &&2\,\big((-1)^{|x|\,|z|}\,[h\cdot[x,y],z]+(-1)^{|x|\,|y|}\,[h\cdot[y,z],x]+(-1)^{|y|\,|z|}\,[h\cdot[z,x],y]\big)=0.
\label{relation 5}
\end{eqnarray}
Subtracting the identity (\ref{relation 1}) from (\ref{relation 5}) we get the identity (\ref{identity 2}). By virtue of the proven identity (\ref{identity 2}), the sum of the first three terms in (\ref{relation 1}) will be equal to zero. The remaining three terms, after a suitable permutation of the arguments, give the identity (\ref{identity 3}).

In order to prove the last identity (\ref{identity 4}) we apply the compatibility condition to the following products
\begin{eqnarray}
2\,[x,y]\cdot[h,z] &=& [[x,y]\cdot h,z]+(-1)^{|h|(|x|+|y|)}[h,[x,y]\cdot z],\nonumber\\
2\,[y,z]\cdot[h,x] &=& [[y,z]\cdot h,x]+(-1)^{|h|(|y|+|z|)}[h,[y,z]\cdot x],\nonumber\\
2\,[z,x]\cdot[h,y] &=& [[z,x]\cdot h,y]+(-1)^{|h|(|x|+|z|)}[h,[z,x]\cdot y].\nonumber
\end{eqnarray}
In the first brackets on the right-hand sides of these relations, we rearrange $h$ and $[x,y]$, $h$ and $[y,z]$, $h$ and $[z,x]$. Next we multiply the first relation by $(-1)^{|x|\,|h|+|y|\,|h|+|x|\,|z|}$, the second by $(-1)^{|y|\,|h|+|z|\,|h|+|x|\,|y|}$ and the third by $(-1)^{|x|\,|h|+|z|\,|h|+|y|\,|z|}$. Finally we rearrange the brackets at the left-hand side of every relation. We obtain
\begin{eqnarray}
(-1)^{|y|\,|z|}\,2\,[h,z]\cdot[x,y] &=& (-1)^{|x|\,|z|}[h\cdot [x,y],z]+[h, (-1)^{|x|\,|z|})[x,y]\cdot z],\nonumber\\
(-1)^{|x|\,|z|}\,2\,[h,x]\cdot[y,z] &=& (-1)^{|x|\,|y|}[h\cdot [y,z],x]+[h,(-1)^{|x|\,|y|}[y,z]\cdot x],\nonumber\\
(-1)^{|x|\,|y|}2\,[h,y]\cdot[z,x] &=& (-1)^{|y|\,|z|}[h\cdot [z,x],y]+[h,(-1)^{|y|\,|z|}[z,x]\cdot y].\nonumber
\end{eqnarray}
Summing up the left-hand sides and the right-hand sides of these relations we get the relation, whose left-hand side (multiplied by 1/2) coincides with the left-hand side of the identity (\ref{identity 4}), but the right-hand side is equal to zero. Indeed the sum of the first terms at the right-hand side of every relation vanishes because of the identity (\ref{identity 2}). The sum of the second terms vanishes because of the identity (\ref{identity 1}).  The identity (\ref{identity 5}) is proved by means of the compatibility condition. In order to prove the identity (\ref{identity 6}) we take the sum of the three following relations: the compatibility condition for the products $2\,x\cdot [u,y\cdot v]$, $2\,v\cdot [x\cdot y,u]$ and the identity (\ref{identity 5}) multiplied by 2.  This ends the proof of Theorem \ref{Theorem 2}.
\end{proof}

\section{CONCLUSIONS}
The impetus for writing this article was the paper \cite{Bai-Guo-Wu-2020}, which introduces the concept of a transposed Poisson algebra, describes its structure, gives important examples and relations to other algebraic structures. Our goal in this article is to show that the notion of a transposed Poisson algebra can be extended to superalgebras in substance merely by means of Manin's rule  of signs, and also to show that important structures of a transposed Poisson algebra (example constructed with the help of derivation and identities) also hold in a transposed Poisson superalgebra in a graded form. We think that the example of transposed Poisson superalgebra constructed by means of differential forms and Lie derivative reveals a geometric meaning of the notion of transposed Poisson superalgebra. In the paper \cite{Bai-Guo-Wu-2020} the authors show that the identities in a transposed Poisson algebra play an important role in establishing a relation with Hom-Lie algebras. We hope that the graded versions of these identities proved in this paper play the same important role in establishing connections between a transposed Poisson superalgebra and Hom-Lie superalgebras. This, and a notion of a transposed Poisson 3-Lie superalgebra, is our next goal in developing this direction.

\section*{ACKNOWLEDGEMENTS}
The authors express their deep gratitude to the colleagues and students participating in the Seminar of Geometry and Topology at the Institute of Mathematics and Statistics of the University of Tartu for useful discussions of the structures presented in this article.






\end{document}